
\documentclass{llncs}
\usepackage[all]{xy}
\usepackage{qsymbols}
\usepackage{marvosym}
\usepackage{fancybox}
\usepackage{prooftree}
\usepackage{url}

\newcommand{\Player}{\textsf{P}}
\newcommand{\Al}{\mbox{\footnotesize \textsf{A}}}
\newcommand{\Be}{\mbox{\footnotesize \textsf{B}}}

\newcommand{\game}[1]{ \langle #1\rangle}
\newcommand{\strat}[1]{ \llbracket #1\rrbracket}
\newcommand{\prof}[1]{ \langle\!\langle #1\rangle\!\rangle}
\newcommand{\conv}[1]{ #1\downarrow}
\newcommand{\Conv}[1]{ #1\Downarrow}
\newcommand{\GIO}{G_{1,0}}
\newcommand{\GOI}{G_{0,1}}
\newcommand{\SIO}{{\sf S_{1}}}
\newcommand{\SOI}{{\sf S_{0}}}
\newcommand{\AcBes}{{\sf{AcBes}}}
\newcommand{\SAcBes}{{\sf{SAcBes}}}
\newcommand{\BcAes}{{\sf{BcAes}}}
\newcommand{\SBcAes}{{\sf{SBcAes}}}
\newcommand{\sioa}{{s_{1,0,a}}}
\newcommand{\soia}{{s_{0,1,a}}}
\newcommand{\siob}{{s_{1,0,b}}}
\newcommand{\soib}{{s_{0,1,b}}}
\newcommand{\full}[1]{\curlyveedownarrow \hspace*{-8pt}\raisebox{6pt}{\scriptsize \it
    #1}}
\newcommand{\fullp}{\full{p}}
\newcommand{\nat}{\ensuremath{\mathbb{N}}}
\newcommand{\real}{\ensuremath{\mathbb{R}}}
 
\newcommand{\convp}{\ensuremath{\mathop{\vdash\! \raisebox{2pt}{\(\scriptstyle p\)}\! \dashv}}}
\newcommand{\sbis}{\sim_{s}}

\title{A simple case of rationality of escalation}
\author{Pierre Lescanne}
\institute{University of Lyon, \'Ecole normale sup\'erieure de Lyon, CNRS (LIP), \\ 46 all\'ee
d'Italie, 69364 Lyon, France}

\begin{document}
\maketitle

\begin{abstract}
  \hrule

\medskip

Escalation is the fact that in a game (for instance an auction), the agents play
forever.  It is not necessary to consider complex examples to establish its
rationality.  In particular, the $0,1$-game is an extremely simple infinite game in
which escalation arises naturally and rationally.  In some sense, it can be
considered as the paradigm of escalation.  Through an example of economic games, we
show the benefit economics can take of coinduction.

  \medskip

\noindent \textbf{Keywords:} economic game, infinite game, sequential game, crash,
escalation, speculative bubble, coinduction, auction.
\end{abstract}
\hrule 


\bigskip 

\rightline{\parbox{8cm}{\begin{it} [T]he future of economics is increasingly technical work that
    is founded on the vision that the economy is a complex system.
\end{it}
}}
\medskip
\rightline{David Collander~\cite{colander05}}

\bigskip

Sequential games are the natural framework for decision processes.  In this paper we
study a decision phenomenon called \emph{escalation}.  Finite sequential games (also
known as extensive games) have been introduced by Kuhn~\cite{Kuhn:ExtGamesInfo53} and
subgame perfect equilibria have been introduced by
Selten~\cite{selten65:_spiel_behan_eines_oligop_mit} whereas escalation has been
introduced by Shubik~\cite{Shubik:1971}. Sequential games are games in which each
player plays one after the other (or possibly after herself).  In some specific
infinite games, it has been showed that escalation cannot occur among rational players.  Here
we show on a simple example, the 0,1 game, that this is not the case if one uses
coinduction. In addition the 0,1 game has nice properties which make it an excellent
paradigm of escalation and a good domain of application for coalgebras and
coinduction.

\section{The problem of escalation}
\label{sec:escal}

\rightline{\parbox{8cm}{\begin{it} That ``rational agents'' should \emph{not} engage
      in such [escalation] behavior seems obvious.
    \end{it}}}  %
  \medskip%
  \rightline{Wolfgang Leininger~\cite{leininger89:_escal_and_cooop_in_confl_situat}}

\medskip

Escalation in sequential games is a classic of game theory and  it is admitted that escalation is irrational.  The rationality which we
consider is that given by equilibria.  It has been proved that in finite sequential games,
rationality is obtained by a specific equilibrium called \emph{backward induction} (see Appendix).
More precisely a consequence of Aumann's theorem~\cite{aumann95} says that an agent
takes a rational decision in a finite sequential game if she makes her choice
according to backward induction. In this paper we generalize backward induction into
\emph{subgame perfect equilibria} and we consider naturally that rationality is
reached by subgame perfect equilibria (\textsf{SPE} in short) relying on
Capretta's~\cite{capretta:2007} extension of Aumann's theorem.

\paragraph{What is escalation?} 
In a sequential game, escalation is the possibility that agents take rational
decisions forever without stopping.  This phenomenon has been evidenced by
Shubik~\cite{Shubik:1971} in a game called the \emph{dollar auction}. Without being
very difficult, its analysis is relatively involved, because it requires infinitely
many strategy profiles indexed by
$n`:\nat$~\cite{lescanne09:_from_coind_to_ration_of_escal}.  Moreover in each step
there are two and only two equilibria.  By an observation of the past decisions of
her opponent an agent could get a clue of her strategy and might this way avoid
escalation.  This blindness of the agents is perhaps not completely realistic and was
criticized (see~\cite{DBLP:journals/corr/abs-1112-1185} Section~4.2). In this paper,
we propose an example which is much simpler theoretically and which offers infinitely
many equilibria at each step.  Due to the form of the equilibria, the agent has no
clue on which strategy is taken by her opponent.

\paragraph{Escalation and infinite games.}

Books and
articles~\cite{colman99:_game_theor_and_its_applic,gintis00:_game_theor_evolv,osborne04a,leininger89:_escal_and_cooop_in_confl_situat,oneill86:_inten_escal_and_dollar_auction}
which cover escalation take for granted that escalation is irrational.  Following Shubik, all accept that escalation takes place
and can only take place in an infinite game, but their argument uses a reasoning on
finite games.  Indeed, if one cuts the
infinite game in which escalation is supposed to take place at a finite position, one
gets a finite game, in which the only right and rational decision is to never start
the game, because the only backward induction equilibrium corresponds to not start
playing.  Then the result is extrapolated by the authors to infinite games by making
the size of the game to grow to infinity.  However, it has been known for a
long time at least since
Weierstra\ss{}~\cite{weierstrass72}, that the ``cut and extrapolate'' method is wrong
(see Appendix).  For Weierstra\ss{} this would lead to the conclusion that the
infinite sum of differentiable functions would be differentiable whereas he has
exhibited a famous counterexample.  In the case of infinite structures like infinite
games, the right reasoning is coinduction.  With coinduction we were able to show
that the dollar auction has a rational
escalation~\cite{DBLP:journals/acta/LescanneP12,DBLP:journals/corr/abs-1112-1185}.
Currently, since the tools used generally in economics are pre-coinduction based,
they conclude that bubbles and crises are impossible and everybody's experience has
witnessed the opposite.  Careful analysis done by quantitative economists, like for
instance
Bouchaud~\cite{bouchaud08:_econom,bouchaud03:_theor_finan_risk_deriv_pricin}, have
shown that bursts, which share much similarities with escalation, actually take place
at any time scale.  Escalation is therefore an intrinsic feature of economics.
Consequently, coinduction is the tool that economists who call for a refoundation of
economics \cite{colander05,bouchaud08:_econom} are waiting
for~\cite{winschel12:_privat}.

\paragraph{Structure of the paper}

This paper is structured as follows. In Section~\ref{sec:binary-sequ-games} we
present infinite games, infinite strategy profiles and infinite strategies, then we
describe the 0,1-game in Section~\ref{sec:0-1}. Last, we introduce the concept of
equilibrium (Sections~\ref{sec:subg-pertf-equil} and~\ref{sec:Nash}) and we discuss
escalation (Section~\ref{sec:escalation}). In an appendix, we talk about finite games
and finite strategy profiles.

\section{Two choice sequential games}
\label{sec:binary-sequ-games}

Our aim is not to present a general theory. For this the reader is invited to look
at~\cite{Abramsky:arXiv1210.4537,DBLP:journals/corr/abs-1112-1185,DBLP:journals/acta/LescanneP12}.
But we want to give a taste of infinite sequential games through a very simple
one. This game has two agents and two choices.  To support our claim about the
rationality of escalation, we do not need more features.
In~\cite{DBLP:journals/corr/abs-1112-1185}, we have shown the existence of a big
conceptual gap between finite games and infinite games.  

Assume that the set $\textsf{P}$ of agents is made of  two agents called \textsf{A} and \textsf{B}.  In this
framework, an infinite sequential two choice game has two shapes.  In the first shape, it
is an ending position in which case it boils down to the distribution of the payoffs
to the agents. In other words the game is reduced to a function $f: A"|->" f_{\Al},
B"|->" f_{\Be}$ and we write it $\game{f}$.  In the second shape, it is a generic
game with a set $C$ made of two potential choices:
$d$ or $r$ ($d$ for \emph{down} and $r$ for \emph{right)}.  Since the game is
potentially infinite, it may continue forever.  Thus formally in this most general
configuration a game can be seen as a triple:
\begin{displaymath}
g = \game{p, g_d, g_r}.
\end{displaymath}
where $p$ is an agent and $g_d$ and $g_r$ are themselves games. The subgame $g_d$
corresponds to the down choice, i.e., the choice corresponding to go \emph{down} and
the subgame $g_r$ corresponds to the \emph{right} choice, i.e., the choices
corresponding to go to the right. In other words, we define a functor:
\[\game{~}: \textsf{X} \quad "->" \quad \textsf{Payoff} \quad + \quad
\Player \times \textsf{X} \times \textsf{X}.\]
of which \textsf{Game} is the final coalgebra and where $\Player =
\{\mathsf{A}, \mathsf{B}\}$ and $\textsf{Payoff} =  \real^\Player$.

\subsection{Strategy profiles}
\label{sec:prof}

From a game, one can deduce \emph{strategy profiles} (later we will also say simply
\emph{profiles}), which is obtained by adding a label, at each node, which is a choice
made by the agent.  A choice belong to the set $\{d, r\}$. In other words, a
strategy profile is obtained from a game by adding, at each node, a new label, namely
a choice. Therefore a strategy profile which does not correspond to an ending game is
a quadruple:
\[s = \prof{p,c,s_d,s_r},\] where $p$ is an agent ($\Al$ or $\Be$), $c$ is choice
($d$ or $r$), and, $s_d$ and $s_r$ are two strategy profiles.  The strategy profile
which corresponds to an ending position has no choice, namely it is reduced to a
function $\prof{f} = \prof{{A"|->" f_{\Al}}, {B"|->" f_{\Be}}}$.  From a strategy
profile, one
can build a game by removing the choices:
\begin{eqnarray*}
 \mathsf{game}(\prof{f}) &=& \game{f}\\
\mathsf{game}(\prof{p,c,s_d,s_r}) &=& \game{p,\mathsf{game}(s_d), \mathsf{game}(s_r)}
\end{eqnarray*}
$\mathsf{game}(s)$ is the underlying game of the strategy profile $s$.

Given a strategy
profile $s$, one can associate, by induction, a (partial) payoff function $\widehat{s}$,
as follows:
\begin{displaymath}
\begin{array}{lll}
\textit{~when~} &s = \prof{f}           &\quad \widehat{s} \quad =\quad f\\
\textit{~when~} & s = \prof{p,d,s_d,s_r} &\quad \widehat{s} \quad =  \quad \widehat{s_d} \qquad \\
\textit{~when~} & s = \prof{p,r,s_d,s_r} & \quad\widehat{s} \quad =  \quad\widehat{s_r} \qquad 
\end{array}
\end{displaymath}
$\widehat{s}$ is not defined if its definition runs in an infinite process.  For instance,
if \(s_{\Al, \infty}\) is the strategy profile defined in
Section~\ref{sec:escalation}, $\widehat{s_{\Al, \infty}}$ is not defined.  To ensure
that we consider only strategy profiles where the payoff function is defined we
restrict to strategy profiles that are called \emph{convergent}, written $\conv{s}$
(or sometimes prefixed $\downarrow(s)$) and defined as the least predicate satisfying 
\begin{displaymath}
  s = \prof{f} \quad \vee \quad 
  s = \prof{p,d, s_d, s_r} "=>" \conv{s_d} \ \wedge\ 
  \prof{p,r, s_d, s_r} "=>" \conv{s_r}.
\end{displaymath}
\begin{proposition}
  If  $\conv{s}$, then $\widehat{s}$ is defined.
\end{proposition}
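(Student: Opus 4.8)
The plan is to argue by induction on the derivation witnessing $\conv{s}$. Since $\conv{\cdot}$ is defined as the \emph{least} predicate closed under the stated rule, it carries the associated induction principle: to show that every $s$ with $\conv{s}$ enjoys a property $Q(s)$, it suffices to check that $\{s \mid Q(s)\}$ is again closed under that rule. I take $Q(s)$ to be ``the recursive computation of $\widehat{s}$ terminates, hence $\widehat{s}$ is defined''. Throughout, the key remark is that the recursion defining $\widehat{\cdot}$ descends exactly along the structure of the convergence derivation: whenever computing $\widehat{s}$ triggers a recursive call on $\widehat{s_d}$ (resp. $\widehat{s_r}$), the rule that produced $\conv{s}$ has supplied an immediate sub-derivation for $\conv{s_d}$ (resp. $\conv{s_r}$), so the computation cannot go on forever.

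First I would treat the base case. If $s = \prof{f}$, the first clause of the definition of $\widehat{\cdot}$ gives $\widehat{s} = f$ with no recursive call, so $\widehat{s}$ is defined.

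Then the two step cases. Suppose $s = \prof{p,d,s_d,s_r}$; the rule for $\conv{}$ requires $\conv{s_d}$ via a strictly smaller derivation, so by the induction hypothesis $\widehat{s_d}$ is defined, and the ``down'' clause gives $\widehat{s} = \widehat{s_d}$, hence $\widehat{s}$ is defined. The case $s = \prof{p,r,s_d,s_r}$ is symmetric, using the sub-derivation for $\conv{s_r}$ and the ``right'' clause. This exhausts the rule, so the induction goes through.

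The only point needing care — and the step I expect to be the main obstacle to \emph{state} cleanly rather than to prove — is making precise that the partial recursion for $\widehat{\cdot}$ is well-founded \emph{precisely} along the derivation tree for $\conv{s}$: each recursive descent in the computation of $\widehat{s}$ is mirrored by passing to an immediate premise of the $\conv{}$ rule, and finite derivations have no infinite branches, so the computation halts. Once this correspondence is spelled out, the argument above is just rule induction on $\conv{}$ and the three cases conclude the proof.
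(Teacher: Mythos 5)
Your proof is correct and follows essentially the same route as the paper's: induction with the base case $s=\prof{f}$ and the two step cases $c=d$, $c=r$, each using the convergence of the relevant immediate subprofile. Your extra care in identifying the induction as rule induction on the derivation of $\conv{s}$ (rather than on the possibly infinite profile itself) is exactly what the paper's terse ``by induction'' must mean, so there is no substantive difference.
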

\begin{proof}
  By induction.  If $s=\prof{f}$, then since $\widehat{s} =f$ and $f$ is defined,
  $\widehat{s}$ is defined.

  If $s = \prof{p,c, s_d, s_r}$, there are two cases: $c=d$ or $c=r$. Let us look at
  $c=d$.  If $c=d$, $\widehat{s_d}$ is defined by induction
  and since $\widehat{s}=\widehat{s_d}$, we conclude that $\widehat{s}$ is defined.

  The case $c=r$ is similar.
\end{proof}
As we will consider the payoff function also for subprofiles, we want the payoff
function to be defined on subprofiles as well.  Therefore we define a property
stronger than convergence which we call \emph{strong convergence}.  We say that a
strategy profile $s$ is \emph{strongly convergent} and we write it $\Conv{s}$ if it
is the largest predicate fulfilling the following conditions.
\begin{itemize}
\item $\Conv{\prof{p,c, s_d, s_r}}$ if
\begin{itemize}
  \item $\prof{p,c, s_d, s_r}$ is convergent,
  \item $s_d$ is strongly convergent,
  \item $s_r$ is strongly convergent.
  \end{itemize}
\item $\prof{f}$ is always strongly convergent
\end{itemize}
More formally:
\[\Conv{s} \quad "<=(c)>" \quad s = \prof{f} \ \vee \ (s = \prof{p,c, s_d, s_r} \wedge
\conv{s} \wedge \Conv{s_d} \wedge \Conv{s_r}).\]

There is however a difference between the definitions of $\downarrow$ and
$\Downarrow$. Wherever $\conv{s}$ is defined \emph{by induction}\footnote{Roughly
  speaking a definition by induction %
  works from the basic elements, here the ending games, to constructed elements.}, %
from the ending games to the game, $\Conv{s}$ is defined \emph{by
  coinduction}\footnote{Roughly speaking a definition by coinduction works on
  infinite objects, like an invariant.}. %

Both concepts are based on the fixed-point theorem established by
Tarski~\cite{tarski55}. 
The definition of $\Downarrow$ is typical of infinite games and means that $\Downarrow$ is invariant along the
infinite game.  To make the difference clear between the definitions, we use the
symbol \("<=(i)>"\) for inductive definitions and the symbol \("<=(c)>"\) for
coinductive definitions.  By the way, the definition of the function \textsf{game} is
also coinductive.

We can define  the notion of \emph{subprofile},
written $\precsim$:
\[s' \precsim s \quad "<=(i)>" \quad s' \sbis s \ \vee\ s = \prof{p, c, s_d, s_r} \wedge
(s' \precsim s_d \vee s' \precsim s_r),\] %
where $\sbis$ is the bisimilarity among profiles defined as the largest binary
predicate $s' \sbis s$ such that
\[s' =\prof{f}=s \quad \vee \quad (s' = \prof{p, c, s_d', s_r'}
\wedge  s = \prof{p, c, s_d, s_r} \wedge s_d'\sbis s_d\wedge s_r' \sbis s_r). \]
Notice that since we work with infinite objects, we may have $s\not\sbis s'$ and
${s\precsim s' \precsim s}$.  In other words, an infinite profile can be a strict
subprofile of itself.  This is the case for $\sioa$ and $\siob$ in
Section~\ref{sec:subg-pertf-equil}. 
If a profile is strongly convergent, then the payoffs
associated with all its subprofiles are defined.
\begin{proposition}
  If $\Conv{s_1}$ and if $s_2 \precsim s_1$, then $\widehat{s_2}$ is defined.
\end{proposition}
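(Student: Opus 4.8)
The plan is to prove that if $\Conv{s_1}$ and $s_2 \precsim s_1$, then $\widehat{s_2}$ is defined, by reducing the claim to the previous proposition (if $\conv{s}$, then $\widehat{s}$ is defined). The key observation is that strong convergence $\Conv{s_1}$ is, by its coinductive definition, an invariant: it propagates to both immediate subprofiles $s_d$ and $s_r$, and in particular each such subprofile is itself strongly convergent, hence convergent. So the real content is that every subprofile of a strongly convergent profile is again strongly convergent — once we have that, the target follows since a strongly convergent profile is convergent, and convergence already guarantees the payoff function is defined.

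First I would proceed by induction on the derivation of $s_2 \precsim s_1$, using the inductive definition $s' \precsim s \;\Leftarrow\; s' \sbis s \;\vee\; (s = \prof{p,c,s_d,s_r} \wedge (s' \precsim s_d \vee s' \precsim s_r))$. In the base case $s_2 \sbis s_1$: bisimilar profiles have the same payoff function (a routine check, or it follows because $\widehat{\cdot}$ is itself defined by following choices that bisimulation preserves), and since $\Conv{s_1}$ implies $\conv{s_1}$ implies $\widehat{s_1}$ is defined, $\widehat{s_2}$ is defined too. For the inductive step, $s_1 = \prof{p,c,s_d,s_r}$ and $s_2 \precsim s_d$ (the case $s_2 \precsim s_r$ is symmetric); unfolding $\Conv{s_1}$ via its coinductive characterization gives $\Conv{s_d}$, so the induction hypothesis applied to $s_2 \precsim s_d$ with $\Conv{s_d}$ yields that $\widehat{s_2}$ is defined.

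The main obstacle, such as it is, is the interplay of an inductive predicate ($\precsim$) sitting on top of a coinductively defined one ($\Downarrow$): one must be careful that the induction is genuinely on the finite derivation of $s_2 \precsim s_1$ (which exists because $\precsim$ is the \emph{least} predicate of its kind), even though the profiles themselves may be infinite and even though, as the text notes, a profile can be a strict subprofile of itself. The coinductive side causes no trouble because we only ever \emph{use} $\Downarrow$ by unfolding it once per induction step — extracting $\Conv{s_d}$ and $\Conv{s_r}$ from $\Conv{\prof{p,c,s_d,s_r}}$ — rather than trying to establish it. The only mildly technical point is the bisimilarity base case, where one should remark that $s' \sbis s$ implies $\widehat{s'} = \widehat{s}$ whenever either side is defined, which is immediate from the definition of $\widehat{\cdot}$ since $\sbis$ preserves the agent, the choice, and (recursively) the relevant subprofile at each step along the play.
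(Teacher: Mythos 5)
Your proof is correct. The paper states this proposition without giving any proof, so there is nothing to compare against; your argument — induction on the (finite) derivation of $s_2 \precsim s_1$, unfolding $\Conv{s_1}$ once per step to extract $\Conv{s_d}$ and $\Conv{s_r}$, and discharging the base case via the earlier proposition that convergence implies definedness of $\widehat{\cdot}$ — is exactly the natural argument the author presumably had in mind, and your handling of the induction-over-coinduction interaction is right. The only auxiliary fact you invoke, that $s' \sbis s$ with $\widehat{s}$ defined gives $\widehat{s'}$ defined (and equal), does deserve its own small induction (e.g.\ on the derivation of $\conv{s}$, proving $\conv{s} \wedge s' \sbis s \Rightarrow \conv{s'} \wedge \widehat{s'}=\widehat{s}$), but you identify it and sketch the reason correctly, so this is a remark on presentation rather than a gap.
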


\subsection{The always modality}

We notice that $\downarrow$ characterizes a profile by a property of the head node,
we would say that this property is local. \(\Downarrow\) is obtained by distributing
the property along the game.  In other words we transform the predicate
\(\downarrow\) and such a predicate transformer is called a \emph{modality}.  Here we
are interested by the modality \emph{always}, also written \(\Box\).

Given a predicate $`F$ on strategy profiles, the predicate $\Box\,P$ is defined
coinductively as follows:

\[\Box\,`F(s) "<=(c)>" `F(s) \wedge s = \prof{p, c, s_d, s_r} "=>" (\Box\,`F(s_d) \wedge
\Box `F(s_r)).\]
The predicate ``is strongly convergent'' is the same as the predicate ``is always
convergent''. 
\begin{proposition}
  \(\Conv{s} \qquad "<=>" \qquad \Box\downarrow(s).\)
\end{proposition}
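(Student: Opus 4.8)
The plan is to prove the two implications of the equivalence $\Conv{s} \iff \Box\!\downarrow\!(s)$ separately, noting that both predicates are defined coinductively (greatest fixed points), so in each direction the natural proof technique is to exhibit the source predicate as a post-fixed point of the functional defining the target predicate, invoking Tarski's fixed-point theorem as the paper does elsewhere.

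For the direction $\Conv{s} \implies \Box\!\downarrow\!(s)$, I would show that the predicate $\Conv{\cdot}$ is itself a post-fixed point of the functional whose greatest fixed point is $\Box\!\downarrow$. So assume $\Conv{s}$. If $s = \prof{f}$, then $\downarrow\!(s)$ holds (an ending profile is convergent) and there is nothing further to check, so $\Box\!\downarrow\!(s)$'s defining clause is satisfied. If $s = \prof{p,c,s_d,s_r}$, then by the coinductive characterization of $\Downarrow$ we have that $s$ is convergent, $\Conv{s_d}$, and $\Conv{s_r}$; hence $\downarrow\!(s)$ holds and $\Conv{s_d}$, $\Conv{s_r}$ are again instances of the source predicate applied to the immediate subprofiles. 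This is exactly the shape required to conclude $\Box\!\downarrow\!(s)$ by coinduction.

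For the converse $\Box\!\downarrow\!(s) \implies \Conv{s}$, I would symmetrically show that $\Box\!\downarrow\!(\cdot)$ is a post-fixed point of the functional defining $\Downarrow$. Assume $\Box\!\downarrow\!(s)$. If $s = \prof{f}$, then $\Conv{s}$ holds trivially since ending profiles are always strongly convergent. If $s = \prof{p,c,s_d,s_r}$, unfolding the definition of $\Box\!\downarrow$ gives $\downarrow\!(s)$ together with $\Box\!\downarrow\!(s_d)$ and $\Box\!\downarrow\!(s_r)$. Thus $s$ is convergent, and $\Box\!\downarrow\!(s_d)$, $\Box\!\downarrow\!(s_r)$ are instances of the source predicate on the subprofiles — precisely the three conditions in the coinductive definition of $\Downarrow$. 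By coinduction we obtain $\Conv{s}$.

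The main subtlety — and the place to be careful rather than a deep obstacle — is keeping straight which unfoldings are inductive and which coinductive: $\downarrow$ is a least fixed point used only as a \emph{local} head-node property inside both larger definitions, whereas $\Box$ and $\Downarrow$ are greatest fixed points, so the proof in each direction is a coinductive (post-fixed-point) argument, not an induction on the structure of $s$ (which need not be well-founded, as the paper stresses with $\sioa$ and $\siob$). Once one observes that the coinductive clauses of $\Downarrow$ and of $\Box\!\downarrow$ literally coincide — "head node convergent, and both subprofiles in the predicate" — the equivalence is essentially immediate, and it is really just a matter of naming the witnessing post-fixed point in each direction.
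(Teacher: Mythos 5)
Your proof is correct. Note that the paper states this proposition without giving any proof, so your argument supplies exactly the routine reasoning it leaves implicit: since the defining clause of $\Conv{s}$ (head-node convergence plus strong convergence of both immediate subprofiles, with $\prof{f}$ always included) literally coincides with the defining clause of $\Box\downarrow(s)$, each predicate is a post-fixed point of the functional defining the other, and both directions follow by coinduction via Tarski's theorem, as you say. Your care in treating $\downarrow$ purely as a local (inductively defined) head-node property, and in avoiding any structural induction on $s$, is exactly the right emphasis given that profiles such as $\sioa$ and $\siob$ are not well-founded.
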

\subsection{Strategies}
\label{sec:strat}

The coalgebra of \emph{strategies}\footnote{A strategy is not the same as a strategy
  profile, which is obtained as the sum of strategies.} is defined by the
functor \[\strat{~}: X "->" \real^{\textsf{P}} \ + \ (\Player + \textsf{Choice})
\times X \times X\] 
where $\textsf{Choice} =\{d,r\}$.
A strategy of agent~$p$ is a game in which some occurrences of
$p$ are replaced by choices.  A strategy is written $\strat{f}$ or $\strat{x,s_1,
  s_2}$.  By replacing the choice made by agent $p$ by the agent $p$ herself, we can
associate a game with a pair consisting of a strategy and an agent:
\begin{displaymath}
\begin{array}{l@{\quad}c@{\quad}ll@{~}l}
  \mathsf{st2g}(\strat{f}, p) &=& \game{f}\\
  \mathsf{st2g}(\strat{x, st_1, st_2}, p) &=& \mathbf{if~} x \in \Player
  &\mathbf{~then~} &\game{x, \mathsf{st2g}(st_1, p), \mathsf{st2g}(st_2, p)} \\
  &&&\mathbf{~else~} &\game{p, \mathsf{st2g}(st_1, p), \mathsf{st2g}(st_2, p)}.
\end{array}
\end{displaymath}

If a strategy $st$ is really the strategy of agent $p$ it should contain nowhere $p$
and should contain a choice $c$ instead.  In this case we say that $st$ is \emph{full for}
$p$ and we write it $st\fullp$.
\begin{eqnarray*}
  \strat{f} \fullp \\
  \strat{x, st_1, st_2}\fullp &"<=(c)>"& (x`;\textsf{Choice}"=>" x \neq p) \wedge st_1\fullp \wedge st_2\fullp.
\end{eqnarray*}
We can sum strategies to make a strategy profile. But for that we have to assume that
all strategies are full and underlie the same game. In other words, $(st_p)_{p`:\Player}$
is a family of strategies such that:
\begin{itemize}
\item $`A p`:\Player, st_p \fullp$,
\item  there exists a game $g$ such that $`A p`:\Player, \mathsf{st2g}(st_p) = g$.
\end{itemize}
We define 
\begin{math} \displaystyle
  \bigoplus_{p`:\Player} st_p
\end{math}
as follows:
\begin{eqnarray*}
  \bigoplus_{p`:\Player} \strat{f} &=& \prof{f}\\
  \strat{c, st_{p',1}, st_{p',2}} \oplus \bigoplus_{p`:\Player\setminus p'} \strat{p', st_{p,1}, st_{p,2}}  &=& \prof{p',c,   \bigoplus_{p`:\Player} st_{p,1},   \bigoplus_{p`:\Player} st_{p,2}}.
\end{eqnarray*}
We can show that the game underlying all the strategies is the game underlying the
strategy profile which is the sum of the strategies.
\begin{proposition}
  $\mathsf{st2g}(st_{p'},p') = \mathsf{game}(\displaystyle \bigoplus_{p`:\Player} st_p).$
\end{proposition}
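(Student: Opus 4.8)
The plan is to prove this equality of games \emph{coinductively}, using that \textsf{Game} is the final coalgebra of $\game{~}$: two games coincide exactly when they are bisimilar, so it is enough to exhibit a bisimulation relating $\mathsf{st2g}(st_{p'},p')$ to $\mathsf{game}(\bigoplus_{p\in\Player} st_p)$. Call a family $(st_p)_{p\in\Player}$ \emph{admissible} when each $st_p$ is full for $p$ and all the games $\mathsf{st2g}(st_p,p)$ coincide --- these are exactly the two conditions of Section~\ref{sec:strat} under which $\bigoplus_p st_p$ is defined. I would then work with the relation on games
\[R \;=\; \{\, (\mathsf{st2g}(st_{p'},p'),\ \mathsf{game}(\textstyle\bigoplus_{p\in\Player} st_p)) \;:\; p'\in\Player,\ (st_p)_{p\in\Player}\ \mbox{admissible} \,\},\]
and show it is a bisimulation; finality of \textsf{Game} then gives $R\subseteq{=}$, which is the claim for the family and the agent of the hypotheses.

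So fix an admissible family $(st_p)_{p\in\Player}$, let $g$ be its common underlying game, and fix $p'\in\Player$. If $g=\game{f}$ is an ending position, then since $\mathsf{st2g}(\cdot,p)$ turns $\strat{f'}$ into $\game{f'}$ and every non-ending strategy into a generic game, each $st_p$ must be $\strat{f}$; hence $\bigoplus_p st_p=\prof{f}$ and both components reduce to $\game{f}$. If $g=\game{q,g_1,g_2}$ is generic, each $st_p$ has the shape $\strat{x_p,st_{p,1},st_{p,2}}$, and comparing $\mathsf{st2g}(st_p,p)$ with $g$ forces the head data: the head $x_q$ of $st_q$ must be a choice $c$ (it cannot be $q$, as $st_q$ is full for $q$, and it cannot be another agent, as $q$ is the head agent of $g$), and for each $p\neq q$ the head $x_p$ of $st_p$ must be the agent $q$ (it cannot be a choice, which would make $p$ the head agent of $g$, nor another agent). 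This is precisely the pattern of the second clause of $\bigoplus$, so $\mathsf{game}(\bigoplus_p st_p)=\game{q,\ \mathsf{game}(\bigoplus_p st_{p,1}),\ \mathsf{game}(\bigoplus_p st_{p,2})}$; and unfolding $\mathsf{st2g}$ --- the \textbf{else} branch if $p'=q$, the \textbf{then} branch if $p'\neq q$ --- gives $\mathsf{st2g}(st_{p'},p')=\game{q,\ \mathsf{st2g}(st_{p',1},p'),\ \mathsf{st2g}(st_{p',2},p')}$. Both games thus have head agent $q$, and for $i\in\{1,2\}$ their $i$-th successors form the pair $(\mathsf{st2g}(st_{p',i},p'),\ \mathsf{game}(\bigoplus_p st_{p,i}))$.

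To close the coinduction I must check that this successor pair again lies in $R$, i.e. that $(st_{p,i})_{p\in\Player}$ is admissible: each $st_{p,i}$ is full for $p$ because fullness propagates to immediate subprofiles, and all the $st_{p,i}$ have the same underlying game --- namely $g_i$ --- by applying $\mathsf{st2g}(\cdot,p)$ to the head shapes just determined. This bookkeeping, tracking the asymmetric roles of the moving agent $q$ and the remaining ones and the two sub-cases $p'=q$ versus $p'\neq q$ in the \textbf{if} of $\mathsf{st2g}$, is the only real obstacle; everything else is routine unfolding of the coinductive definitions. One can phrase the whole argument a little more slickly by equipping the admissible families with the coalgebra structure that sends $(st_p)_p$ with underlying game $\game{q,g_1,g_2}$ to $(q,(st_{p,1})_p,(st_{p,2})_p)$ and $(st_p)_p$ with underlying game $\game{f}$ to $f$, noting that $(st_p)_p\mapsto\mathsf{game}(\bigoplus_p st_p)$ and $(st_p)_p\mapsto\mathsf{st2g}(st_{p'},p')$ are both coalgebra morphisms into the final coalgebra \textsf{Game}, and concluding by uniqueness.
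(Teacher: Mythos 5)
Your proof is correct, and it takes the route the paper leaves implicit: the proposition is stated there without an explicit proof, and the paper's convention for such facts about infinite objects is ``by coinduction'', which is exactly your bisimulation-plus-finality argument, including the essential head-shape analysis showing that an admissible family matches the pattern of the second clause of $\oplus$ and that admissibility (fullness and a common underlying game) is inherited by the two subfamilies, which is what closes the coinduction. The only liberty you take is reading fullness as ``no occurrence of the agent $p$ in $st_p$'' rather than the literally vacuous printed clause $(x \in \textsf{Choice} \Rightarrow x \neq p)$; that is clearly the intended definition (it is what the surrounding text says) and it is the reading your case analysis needs.
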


\section{Comb games and the 0,1-game}
\label{sec:0-1}

We will restrict to simple games which have the shape of combs,
\begin{displaymath}
\newcommand{\pA}[1]{f_{\scriptscriptstyle\mathsf{A},#1}}
\newcommand{\pB}[1]{f_{\scriptscriptstyle\mathsf{B},#1}}
  \xymatrix{
*++[o][F]{\Al} \ar@/^/[r]^r \ar@/^/[d]^{d} &*++[o][F]{\Be} \ar@/^/[r]^r \ar@/^/[d]^{d}
&*++[o][F]{\Al} \ar@/^/[r]^r \ar@/^/[d]^{d} &*++[o][F]{\Be} \ar@/^/[r]^r \ar@/^/[d]^{d} 
&*++[o][F]{\Al} \ar@/^/[r]^r \ar@/^/[d]^{d} &*++[o][F]{\Be} \ar@{.>}@/^/[r]^r \ar@/^/[d]^{d} 
&\ar@{.>}@/^/[r]^r \ar@{.>}@/^/[d]^{d} &\ar@{.>}@/^/[r]^r&\\
f_1&f_2&f_3&f_4&f_5&f_6&&
}
\end{displaymath}
At each step the agents have only two choices, namely to stop or to continue. Let us
call such a game, a \emph{comb game}.

We introduce infinite games by means of equations.  Let us see how this applies to
define the $0,1$-game. First consider two payoff functions:
\begin{eqnarray*}
  f_{0,1} &=& \Al "|->" 0, \Be "|->" 1\\
  f_{1,0} &=& \Al "|->" 1, \Be "|->" 0
\end{eqnarray*}
we define two games 
\begin{eqnarray*}
  \GOI &=& \game{\Al, \game{f_{0,1}}, \GIO}\\
  \GIO &=& \game{\Be, \game{f_{1,0}}, \GOI}
\end{eqnarray*}
This means that we define an infinite sequential game $\GOI$ in which agent $\Al$ is
the first player and which has two subgames: the trivial game $\game{f_{1,0}}$ and
the game $\GIO$ defined in the other equation.  The game $\GOI$ can be pictured as
follows:
\begin{displaymath}
  \xymatrix{
*++[o][F]{\Al} \ar@/^/[r]^r \ar@/^/[d]^d &*++[o][F]{\Be} \ar@/^/[r]^r \ar@/^/[d]^d
&*++[o][F]{\Al} \ar@/^/[r]^r \ar@/^/[d]^d &*++[o][F]{\Be} \ar@/^/[r]^r \ar@/^/[d]^d 
&*++[o][F]{\Al} \ar@/^/[r]^r \ar@/^/[d]^d &*++[o][F]{\Be} \ar@{.>}@/^/[r]^r \ar@/^/[d]^d 
&\ar@{.>}@/^/[r]^r \ar@{.>}@/^/[d]^d &\ar@{.>}@/^/[r]^r&\\
0,1&1,0&0,1&1,0&0,1&1,0&&
}
\end{displaymath}
or more simply in Figure~\ref{fig:boucle}.a.
\begin{figure}[htb!]
  \centering
  \doublebox{\parbox{\textwidth}{ \begin{displaymath}
      \begin{array}{ccc}
        \xymatrix@C=10pt{
          &\ar@{.>}[r]& *++[o][F]{\Al} \ar@/^1pc/[rr]^r \ar@/^/[d]^d 
          &&*++[o][F]{\Be} \ar@/^1pc/[ll]^r \ar@/^/[d]^d \\
          &&0,1&&1,0
        }
        &
        \xymatrix@C=10pt{
          &\ar@{.>}[r]& *++[o][F]{\Al} \ar@{=>}@/^1pc/[rr]^r \ar@/^/[d]^d 
          &&*++[o][F]{\Be} \ar@/^1pc/[ll]^r \ar@{=>}@/^/[d]^d \\
          &&1,0&&0,1
        } 
        &
        \xymatrix@C=10pt{
          &\ar@{.>}[r]& *++[o][F]{\Al} \ar@/^1pc/[rr]^r \ar@{=>}@/^/[d]^d 
          &&*++[o][F]{\Be} \ar@{=>}@/^1pc/[ll]^r \ar@/^/[d]^d \\
          &&0,1&&1,0
        }\\\\
        a\ (\GOI)  & b \ (\sioa) & c \ (\siob)
      \end{array}
\end{displaymath}}
}
\caption{The $0,1$-game and two equilibria seen compactly}
\label{fig:boucle}
\end{figure}
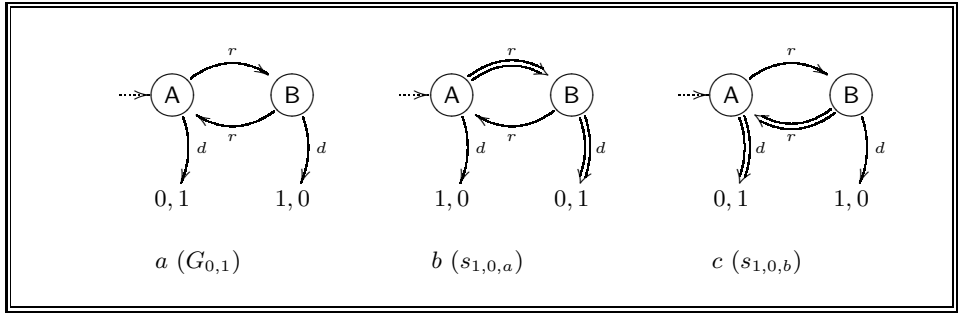
 
From now on, we assume that we consider only strategy profiles whose underlying
game is the 0,1-game. They are characterized by the following predicates
\begin{eqnarray*}
  \SOI(s) &"<=(c)>"& s = \prof{\Al, c, f_{0,1}, s'} \wedge \SIO(s')\\
  \SIO(s) &"<=(c)>"& s = \prof{\Be, c, f_{1,0}, s'} \wedge \SOI(s').
\end{eqnarray*}

Notice that the $0,1$-game we consider is somewhat a zero-sum game, but we are not
interested in this aspect.  Moreover, a very specific instance of a $0,1$ game has
been considered (by Ummels~\cite{DBLP:conf/fossacs/Ummels08} for instance), but these
authors are not interested in the general structure of the game, but in a specific
model on a finite graph, which is not general enough for our taste.  Therefore this
is not a direct generalization of finite sequential games (replacing induction by
coinduction) and this not a framework to study escalation.

\section{Subgame perfect equilibria}
\label{sec:subg-pertf-equil}

Among the strategy profiles, one can select specific ones that are called
\emph{subgame perfect equilibria}.  Subgame perfect equilibria are
specific strategy profiles that fulfill a predicate \textsf{SPE}.  This predicate
relies on another predicate \textsf{PE} which checks a local property.
\[
\begin{array}{ll}
\mathsf{PE}(s) \quad "<=>" \quad \Conv{s}\  &\wedge\ s = \prof{p, d, s_d, s_r} "=>"
\widehat{s_d}(p) \ge \widehat{s_r}(p)\\
& \wedge\ s = \prof{p, r, s_d, s_r} "=>"
\widehat{s_r}(p) \ge \widehat{s_d}(p)
\end{array}
\] %
A strategy profile is a subgame perfect equilibrium if the property \textsf{PE}
holds always:
\[\mathsf{SPE} = \Box \mathsf{PE}.\]
We may now wonder what the subgame perfect equilibria of the 0,1-game are.  We
present two of them in Figure~\ref{fig:boucle}.b and \ref{fig:boucle}.c.  But there
are others.  To present them, let us define a predicate \emph{``\Al{} continues and
  \Be{} eventually stops''}
\begin{eqnarray*}
  \AcBes (s)  "<=(i)>"   s=\prof{p,c,\prof{f}, s'} &"=>"&
  (p =  \Al \wedge f = {f_{0,1}} \wedge c=r \wedge \AcBes(s')) \vee\\ 
  &&(p=\Be \wedge f={f_{1,0}} \wedge (c=d \vee \AcBes(s'))
\end{eqnarray*}
\begin{proposition}\label{prop:AcBes}
$(\SIO(s) \vee \SOI(s))  "=>" \AcBes(s) "=>" \widehat{s} = f_{1,0}$
\end{proposition}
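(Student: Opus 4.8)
The plan is to prove both implications at once: assuming $\SIO(s)\vee\SOI(s)$ together with $\AcBes(s)$, I will derive that $\widehat{s}$ is defined and equal to $f_{1,0}$, by induction on the derivation of $\AcBes(s)$. This is legitimate because $\AcBes$ is defined \emph{inductively}, i.e.\ as a least fixed point, so every witness of $\AcBes(s)$ is a finite well-founded tree and the induction is on the shape of that tree; it is essential \emph{not} to attempt a coinduction here, since the companion property ``\Al{} continues and \Be{} never stops'' does not even produce a defined payoff (cf.\ the profile $s_{\Al,\infty}$). As a preliminary step I would note that the hypothesis $\SIO(s)\vee\SOI(s)$ is inherited by the immediate right subprofile $s'$: unfolding $\SOI(s)$ gives $s=\prof{\Al,c,\prof{f_{0,1}},s'}$ with $\SIO(s')$, and dually unfolding $\SIO(s)$ gives $s=\prof{\Be,c,\prof{f_{1,0}},s'}$ with $\SOI(s')$. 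In particular $s$ always has the head shape $\prof{p,c,\prof{f},s'}$ that triggers the body of the definition of $\AcBes$, with $(p,f)=(\Al,f_{0,1})$ when $\SOI(s)$ and $(p,f)=(\Be,f_{1,0})$ when $\SIO(s)$.

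I would then split on which of $\SOI(s)$, $\SIO(s)$ holds and match against the two disjuncts of the definition of $\AcBes$. If $\SOI(s)$, the head player is $\Al$, so only the first disjunct can apply, forcing $c=r$ and $\AcBes(s')$; then $\widehat{s}=\widehat{s_r}=\widehat{s'}$, and since $\SIO(s')$ and $\AcBes(s')$ hold the induction hypothesis on $s'$ gives $\widehat{s'}=f_{1,0}$, hence $\widehat{s}=f_{1,0}$. If $\SIO(s)$, the head player is $\Be$ and $f=f_{1,0}$, so the second disjunct applies, with two subcases: either $c=d$, in which case $\widehat{s}=\widehat{\prof{f_{1,0}}}=f_{1,0}$ directly --- this is precisely the base case where the $\AcBes$-derivation stops; or $c=r$ together with $\AcBes(s')$, in which case $\widehat{s}=\widehat{s_r}=\widehat{s'}$ and, using $\SOI(s')$ and $\AcBes(s')$, the induction hypothesis again yields $\widehat{s'}=f_{1,0}$.

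I do not expect a genuine obstacle: the argument is a routine well-founded induction, and the only points needing care are bookkeeping ones --- tracking which of $\SIO$, $\SOI$ holds as one descends to $s'$, and using the inductive (not coinductive) nature of $\AcBes$ so that the recursion terminates. It is worth recording explicitly that $\widehat{s}$ is defined at every level for free: the finite $\AcBes$-derivation bottoms out exactly at a node where \Be{} chooses $d$, so the recursive unfolding of the payoff function follows that same finite path, and no appeal to $\downarrow$, $\Downarrow$ or the earlier convergence results is actually needed.
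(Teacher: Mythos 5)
Your proposal is correct and follows essentially the same route as the paper: a well-founded induction on the inductive (least fixed point) definition of $\AcBes$, with a case split on the head player and choice, tracking which of $\SOI$, $\SIO$ passes to the right subprofile. Your bookkeeping is in fact the more reliable one: the payoff is $f_{1,0}$ throughout, in agreement with the statement and its later use in the \textsf{SPE} proposition, whereas the paper's proof text writes $f_{0,1}$ in each case --- an evident slip, since when $\Be$ stops the leaf reached is $\prof{f_{1,0}}$.
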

\begin{proof}
  If $s=\prof{p,c,\prof{f}, s'}$, then $\SOI(s') \vee \SIO(s')$.  Therefore if
  $\AcBes(s')$, by induction, $\widehat{s'} = f_{0,1}$. By case:
 \begin{itemize}
 \item If $p=\Al \wedge c=r$, then $\AcBes(s')$ and by definition of $\widehat{s}$, we have $\widehat{s} =
   \widehat{s'}=f_{0,1}$
 \item if $p=\Be \wedge c=d$, the $\widehat{s}=\widehat{\prof{f_{0,1}}} = f_{0,1}$.
 \item if $p=\Be \wedge c=r$, , then $\AcBes(s')$ and by definition of $\widehat{s}$,
   $\widehat{s}=\widehat{s'} = f_{0,1}$.
 \end{itemize}
\end{proof}
Like we generalize \textsf{PE}  to \textsf{SPE} by applying the modality $\Box$, we
generalize $\AcBes$ into $\SAcBes$ by stating:
\[\SAcBes = \Box\AcBes.\]
There are at least two profiles which satisfies $\SAcBes$ namely $\sioa$ and $\siob$
which have been studied in~\cite{DBLP:journals/corr/abs-1112-1185} and pictured in
Figure~\ref{fig:boucle}:
\begin{displaymath}
  \begin{array}{l@{\qquad\qquad}l}
    \begin{array}{lcl}
      \sioa &"<=(c)>"& \prof{\Al, r, f_{0,1}, \siob}\\
      \soia &"<=(c)>"& \prof{\Al, d, f_{0,1}, \soib}
    \end{array}
&
    \begin{array}{lcl}
      \siob &"<=(c)>"& \prof{\Be, d, f_{1,0}, \sioa}\\
      \soib &"<=(c)>"& \prof{\Be, r, f_{1,0}, \soia}
    \end{array}
  \end{array}
\end{displaymath}
\begin{proposition}\label{prop:SAcBes-conv}
  \(\SAcBes(s) "=>" \Conv{s}.\)
\end{proposition}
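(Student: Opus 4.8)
The plan is to reduce the statement, via the two ``box'' identities already at hand, to one easy induction. By definition $\SAcBes = \Box\AcBes$, and by the proposition relating the two notions of convergence, $\Conv{s} "<=>" \Box\downarrow(s)$; so it suffices to show that $\AcBes$ is locally strong enough to force convergence --- that is, to prove that $\AcBes(s)$ implies $\conv{s}$ on the profiles under consideration --- and then to let the always-modality carry this along the infinite game.

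First I would prove the lemma: if $\SOI(s)\vee\SIO(s)$ and $\AcBes(s)$, then $\conv{s}$. This goes by induction on the derivation of $\AcBes(s)$, following the proof of Proposition~\ref{prop:AcBes} but carrying the predicate $\downarrow$ rather than the payoff function $\widehat{\cdot}$. Since $\SOI(s)\vee\SIO(s)$ forces $s = \prof{p,c,\prof{f},s'}$ with $\SOI(s')\vee\SIO(s')$, the ending-profile base clause of $\AcBes$ never fires. If $p=\Be$ and $c=d$, the chosen branch of $s$ is the ending profile $\prof{f_{1,0}}$, which is convergent, so $\conv{s}$. In the remaining cases $c=r$ (either $p=\Al$ with $f=f_{0,1}$, or $p=\Be$ with $f=f_{1,0}$), $\AcBes(s')$ holds, the induction hypothesis gives $\conv{s'}$, and since the chosen branch of $s$ is $s'$, $\conv{s}$ follows.

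I would then finish by coinduction on $\Conv{} = \Box\downarrow$. Let $S$ consist of the $0,1$-game profiles satisfying $\SAcBes$ together with the two ending profiles $\prof{f_{0,1}}$ and $\prof{f_{1,0}}$, for which strong convergence is immediate. It suffices to check that $S$ is a post-fixed point of the operator whose greatest fixed point is $\Box\downarrow$. Take $s\in S$: if $s$ is an ending profile then $\conv{s}$ holds and the remaining condition is vacuous; otherwise $\SOI(s)\vee\SIO(s)$, and unfolding $\SAcBes(s) = \Box\AcBes(s)$ gives $\AcBes(s)$, hence $\conv{s}$ by the lemma. Writing $s = \prof{p,c,s_d,s_r}$, the subprofile $s_d$ is an ending profile, so $s_d \in S$; unfolding $\Box$ once more gives $\Box\AcBes(s_r)$, and $s_r$ is again a $0,1$-game profile, so $s_r \in S$. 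By the coinduction principle every $s \in S$ satisfies $\Box\downarrow(s)$, i.e.\ $\Conv{s}$, which is what we wanted. Equivalently, the lemma gives the pointwise implication from $\AcBes$ to $\downarrow$ on the relevant profiles, and one concludes by monotonicity of the modality $\Box$ together with $\Conv{} "<=>" \Box\downarrow$.

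I do not anticipate a genuine obstacle; the single point requiring care is that the lemma really does need the standing hypothesis that the underlying game is the $0,1$-game. Without it, $\AcBes(s)$ can hold vacuously at an internal node $s$ whose chosen subprofile diverges, and the passage to $\conv{s}$ breaks down; but since $\SOI$ and $\SIO$ are in force throughout this section, the hypothesis the lemma needs is always available.
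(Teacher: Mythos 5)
Your proof is correct, but note that the paper itself states Proposition~\ref{prop:SAcBes-conv} without giving any proof, so there is no argument of the author's to compare yours with; your job here is to fill a gap, and you do. Your two-step structure --- first an inductive lemma showing that \(\AcBes(s)\) together with \(\SOI(s)\vee\SIO(s)\) forces \(\conv{s}\) (by induction on the derivation of \(\AcBes\), the chosen branch being either the ending down child when \Be{} plays \(d\), or the right subprofile covered by the induction hypothesis), then the lift through the modality \(\Box\) using \(\Conv{s} \Leftrightarrow \Box\downarrow(s)\) and a coinduction whose invariant set also contains the ending profiles \(\prof{f_{0,1}}\) and \(\prof{f_{1,0}}\) --- is exactly the pattern the paper uses for the payoff function (Proposition~\ref{prop:AcBes} and its use in the \textsf{SPE} proposition), so it completes the text in the intended spirit. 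You are also right on the one genuinely delicate point: the statement as printed is false without the standing assumption of Section~\ref{sec:0-1} that only profiles of the 0,1-game are considered, since on a profile with no ending subprofile anywhere \(\AcBes\) holds vacuously at every node, hence \(\SAcBes\) holds while convergence fails; making that hypothesis explicit in the lemma, as you do, is necessary, and it is harmless for the paper because \(\SOI\vee\SIO\) is preserved when passing to the right subprofile while the down subprofile is an ending profile. One cosmetic remark: in the lemma's case analysis you silently omit the case \(p=\Al\), \(c=d\); it indeed cannot occur because the \Al-disjunct of \(\AcBes\) requires \(c=r\), but a sentence saying so would make the case split visibly exhaustive.
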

We may state the following proposition.  
\begin{proposition}
  \(`A s, (\SOI(s) \vee \SIO(s)) "=>" (\SAcBes(s) "=>"  \mathsf{SPE}(s)).\)
\end{proposition}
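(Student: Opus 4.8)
Since $\mathsf{SPE} = \Box\,\mathsf{PE}$, the natural route is coinduction on the modality $\Box$: it suffices to produce a predicate $R$ such that $(\SOI(s)\vee\SIO(s))\wedge\SAcBes(s)$ entails $R(s)$ and such that for every $s$
\[R(s) \ \Rightarrow\ \mathsf{PE}(s) \ \wedge\ \bigl(s = \prof{p,c,s_d,s_r} \ \Rightarrow\ R(s_d) \wedge R(s_r)\bigr);\]
then $R \subseteq \Box\,\mathsf{PE} = \mathsf{SPE}$ by the coinduction principle, which proves the claim. The naive choice $R(s) :\Leftrightarrow (\SOI(s)\vee\SIO(s))\wedge\SAcBes(s)$ fails, because the $d$-subprofile of a $0,1$-profile is an ending profile of payoff $f_{0,1}$ or $f_{1,0}$, which satisfies neither $\SOI$ nor $\SIO$. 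So I would take the slightly larger invariant
\[R(s) \ :\Leftrightarrow\ \bigl(\exists f,\ s = \prof{f}\bigr)\ \vee\ \bigl((\SOI(s)\vee\SIO(s))\wedge\SAcBes(s)\bigr).\]

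\textbf{The coinductive step.} Let $s$ satisfy $R(s)$. If $s = \prof{f}$, then $\Conv{s}$ holds since an ending profile is always strongly convergent, the two implications in $\mathsf{PE}(s)$ hold vacuously, and $s$ has no subprofiles, so there is nothing further to check. Otherwise $(\SOI(s)\vee\SIO(s))\wedge\SAcBes(s)$ holds; here $\Conv{s}$ follows from Proposition~\ref{prop:SAcBes-conv}, and unfolding $\SAcBes = \Box\AcBes$ once yields $\AcBes(s)$ together with $\SAcBes$ of both subprofiles. Now split on the shape. If $\SOI(s)$, write $s = \prof{\Al, c, s_d, s_r}$ with $s_d$ the ending profile of payoff $f_{0,1}$, $s_r = s'$ and $\SIO(s')$; since $\AcBes(s)$ holds and the head agent is $\Al$, the \emph{inductive} clause defining $\AcBes$ forces $c = r$ and $\AcBes(s')$, whence $\widehat{s'} = f_{1,0}$ by Proposition~\ref{prop:AcBes} (using $\SIO(s')$). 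The local requirement of $\mathsf{PE}$ at the head, which for $c = r$ is $\widehat{s_r}(\Al) \ge \widehat{s_d}(\Al)$, thus reads $1 \ge 0$ and holds; and $s_d$ falls under the first disjunct of $R$, while $s_r = s'$ falls under the second (as $\SIO(s')$ and $\SAcBes(s')$). If instead $\SIO(s)$, write $s = \prof{\Be, c, s_d, s_r}$ with $s_d$ the ending profile of payoff $f_{1,0}$, $s_r = s'$, $\SOI(s')$ and $\SAcBes(s')$, hence $\AcBes(s')$, hence again $\widehat{s'} = f_{1,0}$ by Proposition~\ref{prop:AcBes}; then agent $\Be$ gets payoff $0$ on both branches, so the local requirement of $\mathsf{PE}$ holds for whichever value $c$ takes, and the invariant on $s_d, s_r$ is checked as before. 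This closes the coinduction.

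\textbf{Where the difficulty lies.} Two points deserve care. First, one must notice that the invariant has to absorb the ending subprofiles, since $\SOI$ and $\SIO$ do not propagate downward on their own. Second, and more importantly, the hypothesis $\SAcBes$ is genuinely used only through the \emph{least}-fixed-point reading of $\AcBes$, which pins down $c = r$ at an $\Al$-node: without it, at such a node the choice $d$ gives $\Al$ payoff $0$ while $r$ gives $1$, so $\widehat{s_d}(\Al) \ge \widehat{s_r}(\Al)$ would fail and $s$ would not be a subgame perfect equilibrium. At a $\Be$-node no such care is needed, because both branches give $\Be$ payoff $0$. Everything else is routine unfolding of the definitions of $\Box$, $\mathsf{PE}$ and $\widehat{\cdot}$ together with Propositions~\ref{prop:AcBes} and~\ref{prop:SAcBes-conv}.
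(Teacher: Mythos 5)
Your proof is correct and follows essentially the same route as the paper's: a coinductive argument establishing $\Box\,\mathsf{PE}$, with $\Conv{s}$ supplied by Proposition~\ref{prop:SAcBes-conv} and the continuation payoff $f_{1,0}$ supplied by Proposition~\ref{prop:AcBes}, reducing the local check to $1\ge 0$ at \Al-nodes and $0\ge 0$ at \Be-nodes. If anything, you are more explicit than the paper, which leaves the coinduction invariant (absorbing the ending subprofiles) and the case where \Be{} plays $d$ implicit.
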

\begin{proof}
Since \textsf{SPE} is a coinductively defined predicate, the proof is by coinduction.

  Given an $s$, we have to prove \(`A s, \Box \AcBes(s) \wedge (\SOI(s) \vee \SIO(s))
  "=>" \Box\mathsf{PE}(s).\)

  For that we assume $\Box \AcBes(s) \wedge (\SOI(s) \vee \SIO(s))$ and in addition
  (coinduction principle) \(\Box\mathsf{PE}(s')\) for all strict subprofiles $s'$ of
  $s$ and we prove $\mathsf{PE}(s)$. In other words, $\Conv{s} \wedge \prof{p, d, s_d,
    s_r} "=>" \widehat{s_d}(p) \ge \widehat{s_r}(p)\ \wedge\ \prof{p, r, s_d, s_r}
  "=>" \widehat{s_r}(p) \ge \widehat{s_d}(p). $

By Proposition~\ref{prop:SAcBes-conv}, we have $\Conv{s}$.

By Proposition~\ref{prop:AcBes}, we know that for every subprofile $s'$ of a profile
$s$ that satisfies $\SIO(s) \vee \SOI(s)$ we have $\widehat{s'} = f_{1,0}$ except
when $s'=\prof{f_{0,1}}$.  Let us prove $\prof{p, d, s_d, s_r} "=>" \widehat{s_d}(p)
\ge \widehat{s_r}(p)\ \wedge\ \prof{p, r, s_d, s_r} "=>" \widehat{s_r}(p) \ge
\widehat{s_d}(p). $ Let us proceed by case:
  \begin{itemize}
  \item $s=\prof{\Al, r, \prof{f_{0,1}}, s'}$. Then $\SOI(s)$ and $\SIO(s')$. Since
    $\Box \AcBes(s)$, we have $\AcBes(s')$, therefore $\widehat{s'} = f_{1,0}$ hence
    $\widehat{s'}(\Al) = 1$ and $f_{0,1}(\Al) = 0$, henceforth $\widehat{s'}(\Al) \ge
    f_{0,1}(\Al).$
 \item $s=\prof{\Be, r, \prof{f_{1,0}}, s'}$. Then $\SIO(s)$ and $\SOI(s')$. Since
    $\Box \AcBes(s)$, we have $\AcBes(s')$, therefore $\widehat{s'} = f_{1,0}$ hence
    $\widehat{s'}(\Be) = 0$ and $f_{1,0}(\Be) = 0$, henceforth $\widehat{s'}(\Be) \ge
    f_{1,0}(\Be).$
  \end{itemize}
\end{proof}
Symmetrically we can define a predicate $\BcAes$ for ``$\Be$ continues and $\Al$
eventually stops'' and a predicate $\SBcAes$ which is $\SBcAes = \Box\ \BcAes$ which
means that $\Be$ continues always and $\Al$ stops infinitely often. With the same
argument as for $\SAcBes$ one can conclude that 
\[`A s, (\SOI(s) \vee \SIO(s)) "=>" \SBcAes(s) "=>" \mathsf{SPE}(s).\]%
We claim that $\SAcBes \vee \SBcAes$ fully characterizes \textsf{SPE} of
0,1-games, in other words.
\begin{conjecture}
  \(`A s, (\SOI(s) \vee \SIO(s)) "=>" (\SAcBes(s) \vee \SBcAes "<=>"
  \mathsf{SPE}(s)).\)
\end{conjecture}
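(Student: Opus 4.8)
\medskip

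\noindent The plan is as follows. The equivalence splits into two implications, of which only $\mathsf{SPE}(s) \Rightarrow \SAcBes(s) \vee \SBcAes(s)$ is genuinely new: for the other direction a case split reduces $(\SAcBes(s) \vee \SBcAes(s)) \Rightarrow \mathsf{SPE}(s)$ to the proposition already proved, $(\SOI(s) \vee \SIO(s)) \Rightarrow (\SAcBes(s) \Rightarrow \mathsf{SPE}(s))$, and to its stated symmetric counterpart for $\SBcAes$. So I would concentrate on the converse, keeping throughout the standing hypothesis $\SOI(s) \vee \SIO(s)$. Note that this hypothesis is inherited by every \emph{non-terminal} subprofile of $s$, all of which are again $\SOI$- or $\SIO$-profiles (such profiles are never ending positions); the only terminal subprofiles are $\prof{f_{0,1}}$ and $\prof{f_{1,0}}$, occurring as the down components of $\Al$- and $\Be$-nodes.

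\medskip

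\noindent\emph{Step 1: the subprofile payoffs are two-valued, and in fact constant.} Since $\mathsf{PE}$ has $\conv{\cdot}$ as a conjunct and $\mathsf{SPE} = \Box\,\mathsf{PE}$, every \textsf{SPE} profile is strongly convergent, so $\widehat{s'}$ is defined for all subprofiles $s'$; a routine induction on the derivation of $\conv{s'}$ shows moreover that, for $s'$ non-terminal in the comb $s$, $\widehat{s'}$ is the terminal payoff reached by following the chosen branch, hence $\widehat{s'} \in \{f_{0,1}, f_{1,0}\}$. The non-terminal subprofiles of $s$ form a sequence $s = s^{(1)}, s^{(2)}, \dots$ with $s^{(i+1)}$ the right subprofile of $s^{(i)}$, and unfolding $\Box\,\mathsf{PE}(s)$ yields $\mathsf{PE}(s^{(i)})$ for every $i$. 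I would then show that $i \mapsto \widehat{s^{(i)}}$ is constant: if not, being two-valued it has a jump, i.e.\ an $i$ with $\widehat{s^{(i)}} \neq \widehat{s^{(i+1)}}$; by the defining clauses of $\widehat{\cdot}$ a jump forces $s^{(i)}$ to choose $d$, so $s^{(i)} = \prof{p, d, \prof{f}, s^{(i+1)}}$ with $\widehat{s^{(i)}} = f$ and $\widehat{s^{(i+1)}}$ the other of $f_{0,1}, f_{1,0}$. But in the $0,1$-game the mover always receives $0$ by going down, so $f(p) = 0$, whereas the other payoff gives the mover $1$, so $\widehat{s^{(i+1)}}(p) = 1$; then the \textsf{PE} inequality at $s^{(i)}$ reads $0 \geq 1$, a contradiction. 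Hence either $\widehat{s'} = f_{1,0}$ for every non-terminal subprofile $s'$ of $s$, or $\widehat{s'} = f_{0,1}$ for every such $s'$.

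\medskip

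\noindent\emph{Step 2: reading the dichotomy off as the modality.} The last ingredient is the lemma that, for $s'$ with $\SOI(s') \vee \SIO(s')$, $\AcBes(s') \Leftrightarrow (\conv{s'} \wedge \widehat{s'} = f_{1,0})$ and, symmetrically, $\BcAes(s') \Leftrightarrow (\conv{s'} \wedge \widehat{s'} = f_{0,1})$; both directions are inductions --- on the derivation of $\AcBes(s')$ one way, of $\conv{s'}$ the other --- whose common content is that the first node of the chosen branch that plays $d$ exists and is a $\Be$-node (resp.\ an $\Al$-node), every earlier node playing $r$. Granting this: $\AcBes$ holds vacuously at the two terminal subprofiles, so $\SAcBes(s) = \Box\,\AcBes(s)$ holds exactly when $\conv{s'} \wedge \widehat{s'} = f_{1,0}$ for every non-terminal subprofile $s'$ of $s$. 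In the first alternative of Step~1 this is precisely what we have (strong convergence supplies the $\conv{s'}$ part), so $\SAcBes(s)$ holds; the second alternative gives $\SBcAes(s)$ symmetrically, completing the argument.

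\medskip

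\noindent The step I expect to be the main obstacle is the constancy argument of Step~1 --- pinpointing the jump and reading the \textsf{PE}-violation off that one node --- together with the discipline required by the mixture of the inductively defined notions $\conv{\cdot}$, $\widehat{\cdot}$, $\AcBes$, $\BcAes$ with the coinductively defined $\Box$, $\mathsf{SPE}$, $\SAcBes$, $\SBcAes$. In particular $\AcBes$, hence $\SAcBes$, is vacuously true at ending positions, so one must be careful to read $\SAcBes$ (and $\SBcAes$) as a statement only about the payoffs of the \emph{non-terminal} subprofiles. Everything else --- the two-value lemma, the unfolding of $\Box$ into a statement over the $s^{(i)}$, and the characterisation of $\AcBes$, $\BcAes$ by $\conv{\cdot}$ and $\widehat{\cdot}$ --- should be routine once the jump argument is in place.
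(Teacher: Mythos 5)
Note first that the paper itself offers no proof of this statement: it is left as a conjecture, and the only part established in the text is the right-to-left implication, namely $(\SOI(s)\vee\SIO(s)) "=>" (\SAcBes(s) "=>" \mathsf{SPE}(s))$ together with its asserted symmetric counterpart for $\SBcAes$. So your reduction of that direction to the existing propositions is exactly what the paper provides, and the substance of your proposal --- the implication $\mathsf{SPE}(s) "=>" \SAcBes(s)\vee\SBcAes(s)$ --- is genuinely new material that the paper does not contain; there is no ``paper proof'' to compare it with.

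As for correctness, your argument looks sound to me. The spine decomposition $s=s^{(1)}, s^{(2)},\dots$ is legitimate because $\SOI/\SIO$ is inherited along right subprofiles, $\Box\mathsf{PE}$ does yield $\mathsf{PE}(s^{(i)})$ for every $i$, and the jump argument is correct: a change of value forces $s^{(i)}=\prof{p,d,\prof{f},s^{(i+1)}}$ with $f(p)=0$ and $\widehat{s^{(i+1)}}(p)=1$, contradicting the $d$-clause of $\mathsf{PE}$, so the payoffs $\widehat{s^{(i)}}$ are constantly $f_{1,0}$ or constantly $f_{0,1}$. The characterisation $\AcBes(s')"<=>" \conv{s'}\wedge\widehat{s'}=f_{1,0}$ (and its mirror for $\BcAes$) is also right, one direction being essentially Proposition~\ref{prop:AcBes}; just be aware when writing it out that an $\AcBes$-derivation may continue past the first $\Be$-node that plays $d$ (the disjunct $c=d\vee\AcBes(s')$ is inclusive), so the clean invariant is the one you state: the first $d$ on the chosen branch exists and sits at a $\Be$-node, all earlier nodes playing $r$. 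Two small points to tighten: the conjunct of $\mathsf{PE}$ in the paper is $\Conv{s}$, not $\conv{s}$ (this only strengthens your premise, since $\mathsf{SPE}"=>"\mathsf{PE}"=>"\Conv{s}$, or alternatively $\Box\!\downarrow\;=\;\Downarrow$); and the final step from ``$\AcBes$ holds at every subprofile'' to $\Box\AcBes(s)$ deserves one explicit line of coinduction (the predicate ``$\AcBes$ at every subprofile'' is a post-fixed point of the $\Box$-unfolding), which is routine but is precisely where the inductive/coinductive bookkeeping you flag has to be done.
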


\section{Nash equilibria}
\label{sec:Nash}
Before talking about escalation, let us see the connection between subgame perfect
equilibrium and Nash equilibrium in a sequential game.  In \cite{osborne04a}, the
definition of a Nash equilibrium is as follows: \textit{A Nash equilibrium is
  a``pattern[s] of behavior with the property that if every player knows every other
  player's behavior she has not reason to change her own behavior''} in other words,
\textit{``a Nash equilibrium [is] a strategy profile from which no player wishes to
  deviate, given the other player's strategies.'' }.  The concept of deviation of
agent $p$ is expressed by a binary relation we call
\emph{convertibility}\footnote{This should be called perhaps \emph{feasibility}
  following~\cite{rubinstein06:microec} and
  \cite{lescanne09:_feasib_desir_games_normal_form}} and we write \convp.  It is
defined inductively as follows:

    \[ \prooftree s\sbis s'
    \justifies s \convp s'
    \endprooftree
    \]

 \[
    \prooftree s_1 \convp s_1'\qquad s_2 \convp s_2' %
    \justifies \prof{p, c, s_1, s_2} \ \convp \ \prof{p, c', s_1', s_2'}
    \endprooftree
    \]

    \[
    \prooftree 
  s_1 \convp s_1'\qquad s_2 \convp s_2' %
  \justifies \prof{p', c, s_1, s_2 } \ \convp \ \prof{p', c, s_1', s_2'}
  \endprooftree
  \]

We define the predicate \textsf{Nash} as follows:
\[\mathsf{Nash}(s) "<=>" `A p,`A s', s \convp s' "=>" \widehat{s}(p) \ge
\widehat{s'}(p').\]

The concept of Nash equilibrium is more general than that of subgame perfect
equilibrium and we have the following result:
\begin{proposition}
  $\mathsf{SPE}(s) "=>" \mathsf{Nash}(s)$.
\end{proposition}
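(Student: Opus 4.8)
The plan is to proceed by induction on the derivation of the convertibility $s \convp s'$, which is an inductively defined relation with three rules. Fix an agent $p$ and assume $\mathsf{SPE}(s)$, i.e. $\Box\mathsf{PE}(s)$; note that $\mathsf{PE}$ contains the conjunct $\Conv{\cdot}$, so $\Conv{s}$ holds and hence $\widehat{s}$ and the payoffs of all subprofiles of $s$ are defined. The statement to establish by induction, for fixed $p$, is: for every derivation of $s \convp s'$, if $\mathsf{SPE}(s)$ then $\widehat{s}(p) \ge \widehat{s'}(p)$. Two preliminary observations are used repeatedly: (i) if $s = \prof{q,c,s_d,s_r}$ then $\widehat{s} = \widehat{s_c}$, directly from the definition of $\widehat{\cdot}$; and (ii) bisimilar profiles have equal payoff functions, i.e. $t \sbis t' \wedge \conv{t} \Rightarrow \conv{t'} \wedge \widehat{t} = \widehat{t'}$, a routine fact proved by following the single play path that $\widehat{\cdot}$ selects.

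Now the case analysis on the last rule used to derive $s \convp s'$. Base rule, $s \sbis s'$: by (ii), $\widehat{s}(p) = \widehat{s'}(p)$, so the inequality holds. Third rule, $s = \prof{q, c, s_1, s_2}$ and $s' = \prof{q, c, s_1', s_2'}$ with $s_1 \convp s_1'$ and $s_2 \convp s_2'$: unfolding $\Box\mathsf{PE}(s)$ gives $\mathsf{SPE}(s_1)$ and $\mathsf{SPE}(s_2)$, so the induction hypothesis applied to $s_c \convp s_c'$ gives $\widehat{s_c}(p) \ge \widehat{s_c'}(p)$; by (i), $\widehat{s}(p) = \widehat{s_c}(p) \ge \widehat{s_c'}(p) = \widehat{s'}(p)$. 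Second rule, $s = \prof{p, c, s_1, s_2}$ and $s' = \prof{p, c', s_1', s_2'}$: if $c = c'$ the argument just given applies, so assume $c \neq c'$, say $c = d$ and $c' = r$ (the symmetric case is identical). Then $\widehat{s}(p) = \widehat{s_1}(p)$ and $\widehat{s'}(p) = \widehat{s_2'}(p)$ by (i). From $\mathsf{PE}(s)$, using the clause for a node with choice $d$, we get $\widehat{s_1}(p) = \widehat{s_d}(p) \ge \widehat{s_r}(p) = \widehat{s_2}(p)$; from the induction hypothesis on $s_2 \convp s_2'$, using $\mathsf{SPE}(s_2)$ (again from unfolding $\Box\mathsf{PE}(s)$), we get $\widehat{s_2}(p) \ge \widehat{s_2'}(p)$. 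Chaining, $\widehat{s}(p) = \widehat{s_1}(p) \ge \widehat{s_2}(p) \ge \widehat{s_2'}(p) = \widehat{s'}(p)$.

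The one point needing care is definedness of $\widehat{s'}$: convertibility does not preserve convergence, since a deviation may send the play down a divergent branch, so the inequality in the definition of $\mathsf{Nash}$ should be understood as asserted only when $\widehat{s'}$ is defined. This causes no real trouble here, because in the second and third rules $\widehat{s'}$ is by construction the payoff of one of the $s_c'$, so the induction hypothesis is only ever invoked where it produces a meaningful comparison, and in the base case $\conv{s'}$ follows from $\conv{s}$ by (ii). Apart from this bookkeeping the proof is a plain structural induction; the sole game-theoretic ingredient is the single appeal to $\mathsf{PE}$ at the deviating node, which is precisely what separates an equilibrium from an arbitrary profile.
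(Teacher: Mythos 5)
Your proof is correct, and it is worth noting that on this point the paper itself gives no written argument at all: it simply states the proposition and refers to a Coq script from the earlier Lescanne--Perrinel work. Your induction on the (inductively defined) derivation of $s \convp s'$, with the base case discharged by the fact that bisimilar profiles have the same payoff, the unchanged-node cases discharged by unfolding $\Box\mathsf{PE}$ to get $\mathsf{SPE}$ of the subprofiles, and the single appeal to the local condition $\mathsf{PE}(s)$ at the node where agent $p$ switches her choice, is exactly the natural route and is, as far as one can tell, the argument the formalization encodes; so you have supplied the missing text rather than diverged from it. One small remark on your final paragraph: the hedge about $\widehat{s'}$ possibly being undefined is not needed under the hypothesis $\mathsf{SPE}(s)$. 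Since $\mathsf{PE}$ includes $\Conv{s}$ and $\Box\mathsf{PE}$ propagates it, every subprofile of $s$ is convergent; and because the derivation of $s \convp s'$ is finite, the play in $s'$ eventually enters a part of $s'$ bisimilar to a subprofile of $s$, hence converges. The cleanest way to package this is to strengthen your induction statement to ``if $\mathsf{SPE}(s)$ and $s \convp s'$ then $\widehat{s'}$ is defined and $\widehat{s}(p) \ge \widehat{s'}(p)$'', after which each case of your analysis yields definedness of $\widehat{s'}$ for free (via bisimilarity in the base case, via the induction hypothesis on $s_c \convp s_c'$ in the other two), and no reinterpretation of the definition of $\mathsf{Nash}$ is required.
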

The result has been proven in COQ and we refer to the script
(see\cite{DBLP:journals/acta/LescanneP12}):

\centerline{\url{http://perso.ens-lyon.fr/pierre.lescanne/COQ/EscRatAI/}}

\centerline{\url{http://perso.ens-lyon.fr/pierre.lescanne/COQ/EscRatAI/SCRIPTS/}}

Notice that we defined the
convertibility inductively, but a coinductive definition is possible.  But this would
give a  more restrictive definition of Nash equilibrium.

\section{Escalation}
\label{sec:escalation}

Escalation in a game with a set $\textsf{P}$ of agents occurs when there is a tuple of
strategies $(st_p)_{p`:\Player}$ such that its sum is not convergent, in other words,
$\neg~ \conv{\displaystyle (\bigoplus_{p`:\Player} st_p)}$. Said differently, it is
possible that the agents have all a private strategy which combined with those of the
others makes a strategy profile which is not convergent, which means that the
strategy profile goes to infinity when following the choices.  Notice the two uses of
a strategy profile: first, as a subgame perfect equilibrium, second as a combination
of the strategies of the agents.

Consider the strategy:
\begin{eqnarray*}
  st_{\Al,\infty} &=& \strat{r, \strat{f_{0,1}}, st_{\Al,\infty}'}\\
  st_{\Al,\infty}' &=& \strat{\Be, \strat{f_{1,0}}, st_{\Al,\infty}}
\end{eqnarray*}
and its twin
\begin{eqnarray*}
  st_{\Be,\infty} &=& \strat{\Al, \strat{f_{0,1}}, st_{\Be,\infty}'}\\
  st_{\Be,\infty}' &=& \strat{r, \strat{f_{1,0}}, st_{\Be,\infty}}.
\end{eqnarray*}
Moreover, consider the strategy profile:
\begin{eqnarray*}
  s_{\Al, \infty} & = & \prof{\Al, r, \prof{f_{0,1}}, s_{\Be, \infty}}\\
  s_{\Be, \infty} & = & \prof{\Be, r, \prof{f_{1,0}}, s_{\Al, \infty}}.
\end{eqnarray*}
\begin{proposition}~
  \begin{itemize}
  \item $st_{\Al,\infty}\full{\textsf{\scriptsize \sf A}}$,
 \item $st_{\Be,\infty}\full{\textsf{\scriptsize \sf B}}$,
  \item $\mathsf{st2g}(st_{\Al,\infty}, \Al) = \mathsf{st2g}(st_{\Be,\infty}, \Be) = \GOI,$
  \item $\mathsf{game}(s_{\Al, \infty}) = \GOI$,
  \item $st_{\Al,\infty} \oplus st_{\Be,\infty} = s_{\Al, \infty}$,
  \item $\neg\ \conv{s_{\Al, \infty}}$.
  \end{itemize}
\end{proposition}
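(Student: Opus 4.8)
The plan is to verify the six bullet points essentially by unfolding the relevant (co)inductive definitions, treating the first five as mechanical checks and concentrating effort on the last one, which is the genuine mathematical content.

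First I would handle the two fullness claims. Recall $\full{p}$ is defined coinductively by $\strat{f}\fullp$ and $\strat{x,st_1,st_2}\fullp "<=(c)>" (x`;\textsf{Choice}"=>" x\neq p)\wedge st_1\fullp\wedge st_2\fullp$. For $st_{\Al,\infty}\full{\Al}$ I would exhibit the coinductive invariant $\{st_{\Al,\infty}, st_{\Al,\infty}', \strat{f_{0,1}}, \strat{f_{1,0}}\}$ and check the defining clause on each: $st_{\Al,\infty} = \strat{r,\ldots}$ has head symbol $r\in\textsf{Choice}$, which is trivially not $\Al$, and its immediate substrategies $\strat{f_{0,1}}$ and $st_{\Al,\infty}'$ are in the set; $st_{\Al,\infty}' = \strat{\Be,\ldots}$ has head $\Be\in\Player$ (not a choice, so the implication is vacuous) and substrategies $\strat{f_{1,0}}$, $st_{\Al,\infty}$ in the set; the two leaf strategies satisfy the base clause. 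Hence the set is a post-fixpoint and $st_{\Al,\infty}\full{\Al}$. The argument for $st_{\Be,\infty}\full{\Be}$ is symmetric.

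Next, for the $\mathsf{st2g}$ and $\mathsf{game}$ identities I would use the coinductivity of $\mathsf{st2g}$, $\mathsf{game}$ and of the game equations $\GOI = \game{\Al,\game{f_{0,1}},\GIO}$, $\GIO = \game{\Be,\game{f_{1,0}},\GOI}$. Unfolding $\mathsf{st2g}(st_{\Al,\infty},\Al)$: the head $r$ is a choice, so by the \textbf{else} branch it becomes $\game{\Al, \mathsf{st2g}(\strat{f_{0,1}},\Al), \mathsf{st2g}(st_{\Al,\infty}',\Al)} = \game{\Al, \game{f_{0,1}}, \mathsf{st2g}(st_{\Al,\infty}',\Al)}$; and $\mathsf{st2g}(st_{\Al,\infty}',\Al)$, whose head $\Be\in\Player$ takes the \textbf{then} branch, unfolds to $\game{\Be,\game{f_{1,0}},\mathsf{st2g}(st_{\Al,\infty},\Al)}$. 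So the pair $(\mathsf{st2g}(st_{\Al,\infty},\Al),\GOI)$, $(\mathsf{st2g}(st_{\Al,\infty}',\Al),\GIO)$ is a bisimulation, giving $\mathsf{st2g}(st_{\Al,\infty},\Al)=\GOI$; likewise $\mathsf{st2g}(st_{\Be,\infty},\Be)=\GOI$. For $\mathsf{game}(s_{\Al,\infty})=\GOI$ I pair $(s_{\Al,\infty},\GOI)$, $(s_{\Be,\infty},\GIO)$ and unfold $\mathsf{game}$ exactly as above. For $st_{\Al,\infty}\oplus st_{\Be,\infty}=s_{\Al,\infty}$ I unfold the definition of $\bigoplus$ on a two-element $\Player$: since $st_{\Al,\infty}$ is full for $\Al$ with head $r$ and $st_{\Be,\infty}$ is full for $\Be$ with head $\Al$, the sum produces $\prof{\Al, r, \strat{f_{0,1}}\oplus\strat{f_{0,1}}, st_{\Al,\infty}'\oplus st_{\Be,\infty}'} = \prof{\Al,r,\prof{f_{0,1}}, st_{\Al,\infty}'\oplus st_{\Be,\infty}'}$, and a symmetric unfolding of $st_{\Al,\infty}'\oplus st_{\Be,\infty}'$ gives $\prof{\Be,r,\prof{f_{1,0}}, st_{\Al,\infty}\oplus st_{\Be,\infty}}$; matching against the equations defining $s_{\Al,\infty}$ and $s_{\Be,\infty}$ exhibits the required bisimulation.

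Finally, the heart of the statement: $\neg\,\conv{s_{\Al,\infty}}$. Here I would exploit that $\downarrow$ is defined as the \emph{least} predicate satisfying its clause — i.e. by induction — so every convergent profile has a finite well-founded derivation. I argue by contradiction: suppose $\conv{s_{\Al,\infty}}$. Since $s_{\Al,\infty}=\prof{\Al, r, \prof{f_{0,1}}, s_{\Be,\infty}}$ and the chosen direction is $r$, the defining clause forces $\conv{s_{\Be,\infty}}$; and since $s_{\Be,\infty}=\prof{\Be, r, \prof{f_{1,0}}, s_{\Al,\infty}}$ with chosen direction $r$ again, it forces $\conv{s_{\Al,\infty}}$. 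So a derivation of $\conv{s_{\Al,\infty}}$ would contain a strictly smaller derivation of $\conv{s_{\Al,\infty}}$, contradicting well-foundedness of inductive derivations; equivalently, $\neg\conv{\cdot}$ is a post-fixpoint of the functional whose greatest fixpoint it is, and $\{s_{\Al,\infty}, s_{\Be,\infty}\}$ witnesses that both fail to converge. The main obstacle is purely presentational — making this non-well-foundedness argument precise in the paper's coinductive vocabulary rather than technical — since once one observes that the $r$-branch is always taken, the payoff recursion for $\widehat{\cdot}$ never terminates and $s_{\Al,\infty}$ manifestly escapes the least predicate $\downarrow$.
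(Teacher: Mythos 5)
Your proof is correct and matches the paper's approach: the paper's entire proof is the two words ``By coinduction,'' and your bisimulation/invariant checks for the first five items together with the leastness (well-founded derivation) argument for $\neg\,\conv{s_{\Al,\infty}}$ are exactly the details that terse proof leaves to the reader.
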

\begin{proof}
  By coinduction.
\end{proof}

$ st_{\Al,\infty}$ and $ st_{\Be,\infty}$ are both rational since they are built
using choices, namely~$r$, dictated by subgame perfect equilibria\footnote{Our choice
  of rationality is this of a subgame perfect equilibrium, as it generalizes backward
  induction, which is usually accepted as the criterion of rationality for finite
  game.}  which start with $r$.  Another feature of this example is that no agent has
a clue for what strategy the other agent is using.  Indeed after $k$ steps, $\Al$ does
not know if $\Be$ has used a strategy derived of equilibria in $\SAcBes$ or in
$\SBcAes$.  In other words, $\Al$ does know if $\Be$ will stop eventually or not and
vice versa. The agents can draw no conclusion of what they observe.  If each agent
does not believe in the threat of the other she is naturally led to escalation.

\subsection*{Acknowledgements}

The author thanks Samson Abramsky, Franck Delaplace, Stephane Leroux, Mat\-thieu
Perrinel, Ren{\'e} Vestergaard, Viktor Winschel for their help, encouragements and
discussions during this research.

\section{Conclusion}
\label{sec:concl}

In this paper, we have shown how to use coinduction in a field, namely economics,
where it has not been used yet, or perhaps in a really hidden form, which has to be
unearthed.  We foresee a future for this tool and a possible way for a refoundation
of economics.


\appendix

\section{Finite 0,1 games and the ``cut and extrapolate'' method}
\label{sec:finite}

We spoke about the ``cut and extrapolate'' method, applied in particular to the
dollar auction.  Let us see how it would work on the 0,1-game.  Finite games, finite
strategy profiles and payoff functions of finite strategy profiles are the inductive
equivalent of infinite games, infinite strategy profiles and infinite payoff functions
which we presented.  Notice that payoff functions of finite strategy profiles are
always defined.  Despite we do not speak of the same types of objects, we use the
same notations, but this does not lead to confusion.  Consider two finite families of
finite games, that could be seen as approximations of the 0,1-game:
\begin{displaymath}
\begin{array}{c@{\quad~~}c}
\begin{array}{lcl}
  F_{0,1} &=& \game{\Al, \game{f_{0,1}}, \game{\Be, \game{f_{1,0}}, F_{0,1}}}  \cup \{\game{f_{0,1}}\}
\end{array}
&%
\begin{array}{lcl}
  K_{0,1} &=& \game{\Al, \game{f_{0,1}}, K_{0,1}'}\\
  K_{0,1}' &=& \game{\Be, \game{f_{1,0}}, K_{0,1}} \cup \{\game{f_{1,0}}\}
\end{array}
\end{array}
\end{displaymath}
In $F_{0,1}$ we cut after $\Be$ and replace the tail by $\game{f_{0,1}}$.  In
$K_{0,1}$ we cut after $\Al$ and replace the tail by $\game{f_{1,0}}$.
Recall~\cite{vestergaard06:IPL} the predicate \textsf{BI}, which is the finite
version of~\textsf{PE}.
\begin{eqnarray*}
  \mathsf{BI}(\game{f})\\
  \mathsf{BI}(\game{p, c, s_d, s_r}) &=& \mathsf{BI}(s_l) \wedge \mathsf{BI}(s_r)
  \wedge \\
  &&\prof{p, d, s_d, s_r} "=>"
\widehat{s_d}(p) \ge \widehat{s_r}(p)\ \wedge\\ && \prof{p, r, s_d, s_r} "=>"
\widehat{s_r}(p) \ge \widehat{s_d}(p)
\end{eqnarray*}
We consider the two families of strategy profiles:

\begin{displaymath}
\begin{array}{c@{\qquad\quad}c}
\begin{array}{rcl}
  \mathsf{SF}_{0,1}(s) &"<=(i)>"& (s = \prof{\Al, d \vee r , \prof{f_{0,1}}, \prof{\Be,
      r, \prof{f_{1,0}}, s'}} \wedge \mathsf{SF}_{0,1}(s')  \quad \vee \\ 
  && s = \prof{f_{0,1}}
  \\\\
  \mathsf{SK}_{0,1}(s) &"<=(i)>"& s = \prof{\Al, r, \prof{f_{0,1}}, s'} \wedge \mathsf{SK}_{0,1}'(s')\\
  \mathsf{SK}_{0,1}'(s) &"<=(i)>"& (s = \prof{\Be, d , \prof{f_{1,0}}, s'} \vee s = \prof{\Be, r , \prof{f_{1,0}}, s'}) \wedge
  \mathsf{SK}_{0,1}(s') \quad \vee \\ &&s = \prof{f_{1,0}}
\end{array}
\end{array}
\end{displaymath}
In $\mathsf{SF}_{0,1}$, $\Be$ continues and $\Al$ does whatever she likes and in
$\mathsf{SK}_{0,1}$, $\Al$ continues and $\Be$ does whatever she likes.  The
following proposition characterizes the backward induction equilibria for games in
$F_{0,1}$ and $K_{0,1}$ respectively and is easily proved by induction:
\pagebreak[2]
\begin{proposition}~
  \begin{itemize}
  \item $\mathsf{game}(s) `:F_{0,1} \wedge \mathsf{SF}_{0,1}(s) "<=>" \mathsf{BI}(s)$,
  \item $\mathsf{game}(s) `: K_{0,1}  \wedge \mathsf{SK}_{0,1}(s) "<=>" \mathsf{BI}(s)$.
  \end{itemize}
\end{proposition}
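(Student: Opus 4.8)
The plan is to prove, for each of the two families, both directions of the biconditional by induction over the finite structures involved, reading the statement — as the surrounding text intends — with the underlying game fixed to lie in $F_{0,1}$ (resp.\ $K_{0,1}$). For the $K_{0,1}$ clause the induction is simultaneous over the two mutually recursive predicates $\mathsf{SK}_{0,1}$, $\mathsf{SK}_{0,1}'$ and the two interleaved game shapes $K_{0,1}$, $K_{0,1}'$; the $F_{0,1}$ clause needs only a plain structural induction.

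First I would record the payoff lemma on which both directions rest: by a one-line induction, $\mathsf{SF}_{0,1}(s) \Rightarrow \widehat{s} = f_{0,1}$, and likewise $\mathsf{SK}_{0,1}(s) \Rightarrow \widehat{s} = f_{1,0}$ and $\mathsf{SK}_{0,1}'(s) \Rightarrow \widehat{s} = f_{1,0}$. This is the finite analogue of Proposition~\ref{prop:AcBes}, and it holds because in an $\mathsf{SF}$-profile $\Be$ always plays $r$, so following the choices the play reaches either the terminal $\prof{f_{0,1}}$ or a node where $\Al$ stops into $\prof{f_{0,1}}$ (symmetrically for $\mathsf{SK}$). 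Recall also that for finite profiles $\widehat{s}$ is always defined, so every inequality occurring in $\mathsf{BI}$ makes unconditional sense.

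For $\mathsf{SF}_{0,1}(s) \Rightarrow \mathsf{BI}(s)$ I would induct on the derivation of $\mathsf{SF}_{0,1}(s)$: the base $s = \prof{f_{0,1}}$ is immediate, and in the step $s = \prof{\Al, c, \prof{f_{0,1}}, \prof{\Be, r, \prof{f_{1,0}}, s'}}$ with $\mathsf{SF}_{0,1}(s')$ the induction hypothesis gives $\mathsf{BI}(s')$ (and the terminal subprofiles satisfy $\mathsf{BI}$ trivially), leaving only the two local $\mathsf{PE}$ inequalities: at the $\Be$-node (choice $r$) we need $\widehat{s'}(\Be) \ge f_{1,0}(\Be)$, i.e.\ $1 \ge 0$ by the payoff lemma, and at the $\Al$-node the inequality is $0 \ge 0$ for either value of $c$. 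Conversely, for $\mathsf{BI}(s) \Rightarrow \mathsf{SF}_{0,1}(s)$ (under $\mathsf{game}(s) \in F_{0,1}$) I would induct on the structure of $\mathsf{game}(s)$: the case $\game{f_{0,1}}$ is trivial, and otherwise $s = \prof{\Al, c, \prof{f_{0,1}}, \prof{\Be, c'', \prof{f_{1,0}}, s'}}$ with $\mathsf{game}(s') \in F_{0,1}$; unfolding $\mathsf{BI}(s)$ gives $\mathsf{BI}(s')$, hence $\mathsf{SF}_{0,1}(s')$ by induction, hence $\widehat{s'} = f_{0,1}$. Then the $\mathsf{PE}$ condition at the $\Be$-node rules out $c'' = d$, which would demand $f_{1,0}(\Be) = 0 \ge \widehat{s'}(\Be) = 1$; so $c'' = r$ and $\mathsf{SF}_{0,1}(s)$ follows, $\Al$'s choice $c$ remaining unconstrained on both sides.

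The $K_{0,1}$ clause is the mirror image, with the two agents and the payoffs $f_{0,1}$, $f_{1,0}$ swapped: $\mathsf{BI}$ on a $K_{0,1}$-game forces $\Al$ to play $r$ (stopping would require $f_{0,1}(\Al) = 0 \ge \widehat{s'}(\Al) = 1$) while leaving $\Be$ free, which is exactly $\mathsf{SK}_{0,1}$, and one simply threads this argument through the simultaneous induction on $\mathsf{SK}_{0,1}$/$\mathsf{SK}_{0,1}'$ and $K_{0,1}$/$K_{0,1}'$. I expect no real obstacle here — the text fairly calls it ``easily proved by induction'' — the only points needing care being to isolate the payoff lemma up front (both directions use the constant value of $\widehat{s}$ precisely to make one local inequality strict) and to organise the $K$ case as a genuinely simultaneous induction rather than a single one.
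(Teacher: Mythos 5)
Your proof is correct and matches the paper's approach: the paper offers no details beyond declaring the proposition ``easily proved by induction,'' and your structural/simultaneous inductions, together with the preliminary payoff lemma ($\mathsf{SF}_{0,1}(s) \Rightarrow \widehat{s}=f_{0,1}$, $\mathsf{SK}_{0,1}(s) \Rightarrow \widehat{s}=f_{1,0}$) in the spirit of the paper's Proposition on $\AcBes$, are exactly the intended filling-in of that induction. Your reading of the statement with the underlying game fixed in $F_{0,1}$ (resp.\ $K_{0,1}$) is also the intended one.
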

This shows that cutting at an  even or an odd position does not give the same strategy
profile by extrapolation.  Consequently the ``cut and extrapolate'' method does not
anticipate all the subgame perfect equilibria.  Let us add that when cutting we
decide which leaf to insert, namely $\game{f_{0,1}}$ or $\game{f_{1,0}}$, but we
could do another way around obtaining different results.

\paragraph{0,1 game and limited payroll.}

To avoid escalation in the dollar auctions, some require a \emph{limited payroll},
i.e., a bound on the amount of money handled by the agents, but this is inconsistent
with the fact that the game is infinite.  Said otherwise, to avoid escalation, they
forbid escalation.  One can notice that, in the 0,1-game, a limited payroll would not
prevent escalation, since the payoffs are anyway limited by $1$.
\end{document}